\newtheorem{remark}{Remark}
\newtheorem{theorem}{Theorem}[section]
\newtheorem{proposition}{Proposition}[section]
\begin{document}
\title{{\footnotesize\tt Submitted to the 51$^{\rm st}$ IEEE Conf. on Decision and Control,  Dec. 10-13, 2012}\\
Animal-Inspired Agile Flight Using Optical Flow Sensing}
\author{K.\ Sebesta \& J.\ Baillieul
\thanks{Research support by ODDR\&E MURI10 Program Grant Number N00014-10-1-0952 is gratefully acknowledged.}}
\IEEEaftertitletext{\vspace{-2\baselineskip}} 
\maketitle

\begin{abstract}
There is evidence that flying animals such as pigeons, goshawks, and bats use optical flow  sensing to enable high-speed flight through forest clutter.  This paper discusses the elements of a theory of controlled flight through obstacle fields in which motion control laws are based on optical flow sensing.  Performance comparison is made with feedback laws that use distance and bearing measurements, and practical challenges of implementation on an actual robotic air vehicle are described.  The related question of fundamental performance limits due to clutter density is addressed.
\end{abstract}

\begin{IEEEkeywords}
\noindent
optical flow sensing, Dubins vehicle, Markovian obstacle field
\end{IEEEkeywords}

\section{INTRODUCTON}\setcounter{equation}{0}
\label{sec:jb:Intro}

In the natural world, one finds enormous diversity in visual capabilities among different species of animals.  Birds of prey typically possess significantly greater visual acuity than humans (or other animals), and while most mammals are inferior to humans in visual acuity, they have superior abilities in detecting motion.  In the present paper, possible mechanisms by which certain flying animals use visual sensory feedback to control their movement though cluttered environments are discussed.  The principal emphasis will be on the use of optical flow techniques.

Although there is growing evidence that flying animals use optical flow sensing to navigate and control their motions (\cite{Huston}), the ways in which such sensing is integrated with other sensory modalities (inertial sensing, stereo visual depth perception, etc.) is not well understood.  Flying animals tend to have constrained visual capabilities---e.g. immobile eyes with fixed-focus optics.  These animals must therefore employ visual strategies that are adapted to their specialized visual abilities.  Many of these strategies emphasize visual cues derived from motions of retinal images.  (See \cite{Srinivasan}.)

The interplay between motion control and motion perception has recently been studied by a number of researchers.  Some of this work has been inspired by animal predators that have developed the ability to camouflage their motions when approaching to capture prey.  The key to this strategy is to  control the apparent relative motion as perceived on the retina of the prey.  (\cite{Mizutani})  In the spirit of the research reported below, Justh and Krishnaprasad (\cite{Justh}) have proposed a biologically plausible feedback flight control law that exhibits motion camouflage relative to a moving target.

The present paper describes some of our recent work aimed at developing robotic motion control control laws that produce motions resembling those of animals.  Of particular interest is high-speed flight through forest clutter.  There are a number of fundamental research questions that the work addresses.  While different flying animals employ complex combinations of sensing modalities, a fundamental premise of the paper is that in idealized environments, it should be possible to successfully use optical feedback alone to guide a UAV through an obstacle field.  The research has the following components: 1. We are continuing the development of operating regime dependent motion primitives (\cite{Baillieul04}) that will create a set of behaviors that is rich enough to achieve our goal of high-speed flight through a field of obstacles.  2. We would like to use optical sensory feedback in a way that reflects what is known about the mechanisms of animal flight and in a way that the motions that are generated by the proposed laws have qualitative and quantitative characteristics of animal motions observed in nature.  3. Finally, we would like to understand implementation independent fundamental bounds on the performance limits of flight through models of forest and other types of clutter.  Recent work by Karaman and Frazzoli \cite{frazzoli} has shown that under a certain idealization there is a critical velocity below which a mathematical model of a bird can fly through a model forest without colliding with a tree yet above this critical velocity, an infinitely long collision-free flight path does not exist.  Our goal is to understand how the answers to such questions change as we vary our optics-enabled control designs.

\subsection{Interpreting optical flow generated by thin objects with sharp boundaries}

Perhaps the single-most important feature of optical flow sensing is that when a vehicle is moving toward an object, it enables a very simple determination of {\em time-to-contact} under constant closing velocity.  This determination can be made without knowledge of the size, distance, or velocity of motion toward the object.  To see this transparently, we consider the idealized thin obstacle depicted in Fig.\ \ref{fig:jb:OpticalFlow}.
\begin{figure}[ht] 
\begin{center}\includegraphics[width=2.3in]{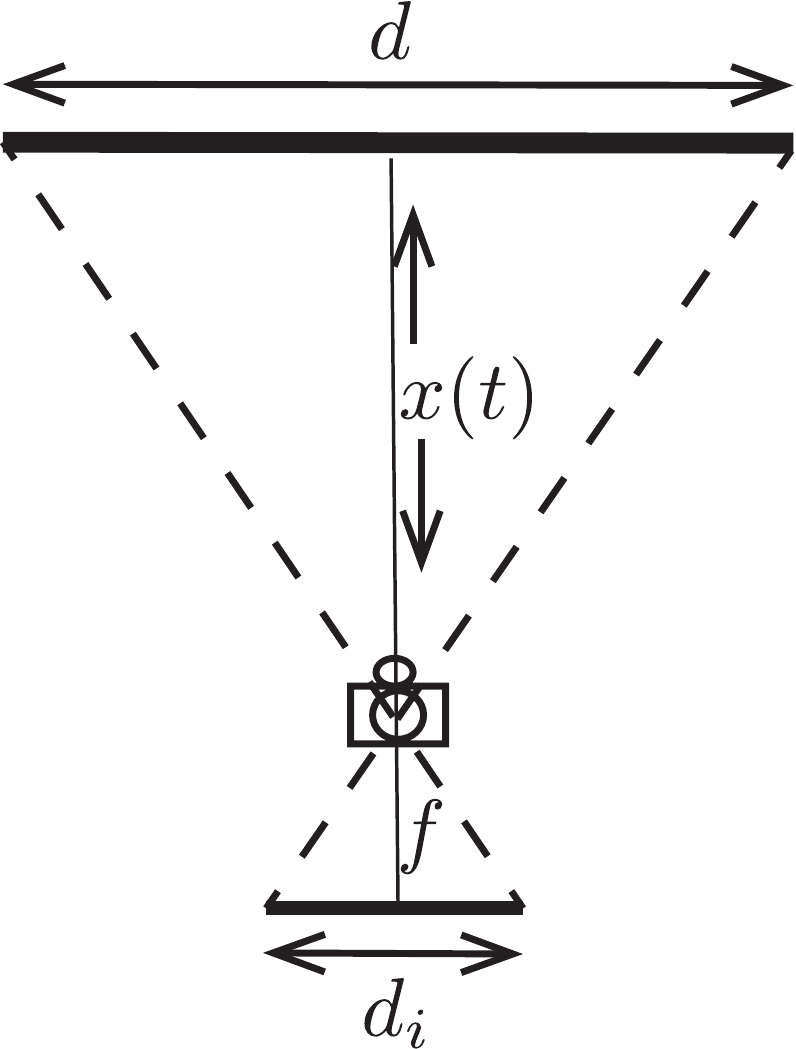}\end{center}
\caption{A thin object of diameter $d$ is approached at a constant velocity $\dot x(t)=v$. We assume the obstacle registers on the image plane of a pinhole camera, and the length of the image is $d_i$.} 
\label{fig:jb:OpticalFlow} 
\end{figure}  
If the obstacle is observed using a pinhole camera as depicted, it creates an image on the focal plane of the camera.  The variables here depicted are $d=$ {\em diameter of the object}, $d_i=$ {\em diameter of the image on the image plane}, $x=$ {\em distance from the camera to the object}, and $f=$ the focal length of the camera.  These variables are related by the {\em fundamental constitutive equation}
\begin{equation}
\frac{d}{x}=\frac{d_i}{f}.
\label{eq:jb:basic}
\end{equation}
Suppose the camera moves toward the object at constant speed, $\dot x(t)=-v$ with $v>0$.  Denote $x(0)$ by $x_0$.  As $x(t)$ decreases, $d_i=d_i(t)$ increases.  The quantities $f,d$ remain fixed.  Using the constitutive equation (\ref{eq:jb:basic}), it is possible to derive a differential equation governing the evolution of the image size $d_i$:
\begin{equation}
\dot d_i(t)=\frac{v}{d\cdot f}d_i(t)^2.
\label{eq:jb:OpticalFlow}
\end{equation}
When $v$ is held to any constant positive value, it is not surprising that there is a finite escape time for (\ref{eq:jb:basic}).  The easily derived solution 
\[
d_i(t)=\frac{d\cdot f}{x_0- v\cdot t}
\]
shows this to be $\tau=x_0/v$.  Remarkably, if it is assumed that none of the quantities $d$, $x_0$, or $v$ is known, 
it is still possible to determine $\tau$.  Rewriting (\ref{eq:jb:basic}) as
\[
d\cdot f = d_i(t)\cdot x(t), 
\]
and differentiating both sides, we have $\dot d_i(t)\,x(t)-v\,d_i(t)=0$.   Under the assumption of constant closing velocity, this may be written as
\[
\frac{d_i}{\dot d_i}=\frac{x_0-v\,t}{v}.
\]
This is zero when $t=x_0/v$ (the time of contact), and at $t=0$, we see that $d_i/\dot d_i = x_0/v =\tau$ is the time remaining until contact. 

While this analysis has been done in terms of the observed changes in an image of an ideally situated idealized object, it in fact applies more generally to optical flow associated with any feature point in the field of view.  The general conclusion is that if $d_i(t)$ is the location of an image feature in the image plane, $\tau=d_i(t)/\dot d_i(t)$ is the time remaining until the camera is directly abeam of the actual feature.  This will be discussed in greater detail in Section \ref{sec:jb:motion}.

The time-to-contact parameter that we have called $\tau$ has been the focus of research in the psychology of human and animal perception.  Kaiser and Mowafy (\cite{Kaiser}) have noted that concepts closely related to time-to-contact are relevant in a variety of settings in which motion perception is important.  In particular, they have reported experiments in which humans are required to use optical flow information from screen images to estimate the time remaining before a simulated object is passed.  The work indicates that subjects have greater difficulty estimating time-to-pass for objects that are significantly off the axis of approach.  Lee and Reddish (\cite{Lee-Reddish}) present evidence that $\tau$ plays a role in the way that gannets time their dives into the ocean in attempts to catch fish, and Wang and Frost (\cite{Wang-Frost}) have reported finding a part of the brains of pigeons that respond selectively to approaching objects in a way that is consistent with $\tau$ being the key perceptual variable.  Like the finding of Kaiser and Mofawy, the response in the pigeon brain was markedly reduced when the object deviated 5\% or more from a collision course.

The contribution of the present paper is to develop an understanding of $\tau$ in cases where the optical sensor is moving at a nonconstant velocity and to use this understanding to design motion control algorithms based on sets of $\tau$'s associated with selected features in the sensor image. We are not aware of previous attempts to do this, and it is hoped that the kinds of designs under development will be useful in creating animal-like controlled motions of UAV's.  In Section 3 below, we introduce the concept of {\em time-to-transit} with respect to a feature point, which we denote by $\tau(t)$, the time-dependency indicating that $\tau$ may vary according to how the optical sensor accelerates or follows a curved path with respect to the feature point.

The organization of the paper is as follows.  Section II proposes a quantized steering model of a Dubins-like model of planar nonholonomic motion.  Inspired by work by Karaman and Frazzoli (\cite{frazzoli}), the controlled motions of quantized Dubins vehicles through what we term Markovian obstacle fields are analyzed.  It is shown that the likelihood of a controlled motion successfully transiting an obstacle field without collision depends on both the size and density of obstacles in the obstacle field and the steering authority (effectively the turning radius) that can be applied by the vehicle's controller.  For a wide range of obstacle densities, it is shown that there are critical levels of steering authority, slightly below which it is almost impossible to transit an obstacle field and slightly above which it is almost certain the there will be a realizable collision-free path.  In Section III, we consider motion control based on optical sensing and introduce the concept of {\em time-to-transit} a feature point.  Control laws based on sensing of proximity and heading are discussed along with a control law based on times to transit certain key feature points.  Section IV describes an implementation of optical flow algorithm on an actual quadrotor UAV using a GoPro camera.  Conclusions and plans for further research are given in Section V.

\section{A Markovian Obstacle Field}
\label{sec:jb:Markovian}

Consider an infinite line of one-dimensional obstacle slats as depicted in Fig.\ \ref{fig:jb:ObstacleField}.  The lengths and positions of each slat are random variables whose lengths and placement a modeled by a Markov law as follows.  Let $s$ denote position along the infinite line.  The line is partitioned into a countable number of adjacent open and closed segments.  We denote the defining partition as $\{\dots<s_{-1}<s_0<s_1<\dots\}$ which is alternately composed of obstacle slats (closed segments of the form $[s_{2k},s_{sk+1}]$) and open spaces (open segments ($s_{2k-1},s_{2k}$)).  The position $s$ lies in an open space with probability $p_1(s)$ and on an obstacle slat with probability $p(_2(s)$.  Note that $p_1(s)+p_2(s)\equiv 1$.

A point $s$ may or may not lie on an obstacle slat, and in either case, as we consider points to the right of $s$, a slat boundary will be encountered sooner or later.  As the value of $s$ increases the dependence of our probabilities on the spatial variable is
\[
\begin{array}{ccc}
p_1(s+\Delta s) & = &(1-\beta\,\Delta s)\,p_1(s)+\alpha\,\Delta s\, p_2(s) +  o(\Delta s)\\
p_2(s+\Delta s) & = &\beta\,\Delta s\,p_1(s) + (1-\alpha\,\Delta s)\,p_2(s)+  o(\Delta s),
\end{array}
\]
where as $s$ increases, the probability of coming to the left edge of a slat in the next $\Delta s$ inches, given that the point $s$ lies in open space, is $\beta\,\Delta s + o(\Delta s)$, while the probability of coming to the right edge of a slat, given that $s$ lies on a slat, is  $\alpha\,\Delta s + o(\Delta s)$.  In the limit, as $\Delta s\to 0$, these equations may be rendered
\begin{equation}
\begin{array}{ccc}
dp_1/ds & = & -\beta\, p_1(s) + \alpha\, p_2(s),\\
dp_2/ds & = &  \beta\, p_1(s) - \alpha\, p_2(s).
\end{array}
\label{eq:jb:Markov}
\end{equation}

We assume that the distribution of slats is spatially stationary, and it is immediate that the stationary distribution is $p_1=\alpha/(\alpha+\beta),\, p_2=\beta/(\alpha+\beta)$.  Thus, for any point $s$, the probability of $s$ lying on a slat is $p_2$, and the probability that it lies in open space is $p_1$.

We can describe that statistical characteristics of the sizes of obstacles and open spaces as follows.  Suppose the point $s=0$ lies in open space between obstacles.

Set up in this way, the distribution governing the occurrence of an obstacle edge as the variable $s$ increases in an interval of open space is exponential with parameter $\beta$.  This means that the mean width of the open space is $1/\beta$, while the variance is $1/\beta^2$.  Similarly, the right hand obstacle edge is exponentially distributed with parameter $\alpha$, and the mean obstacle width and variance are $1/\alpha$ and $1/\alpha^2$ respectively.   Assuming the open space is on average wider than the obstacles ($1/\beta > 1/\alpha$), the variance in the size of the open spaces is larger than the variance in obstacle widths as well.

\begin{figure}[ht] 
\begin{center}\includegraphics[width=3.0in]{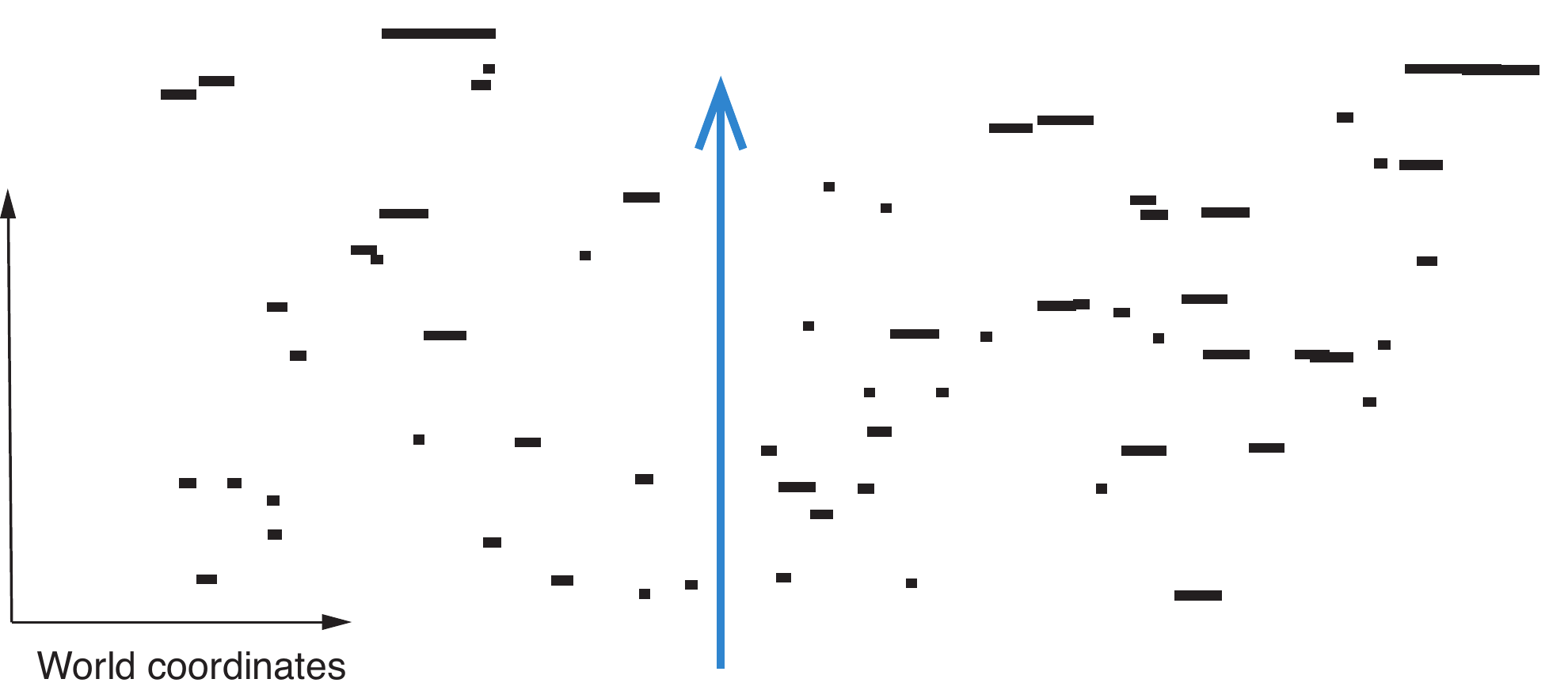}\end{center}
\caption{A Markovian obstacle field.  Obstacle heights are one pixerl while widths follow an exponential distribution with ``rate'' parameter $\alpha=1.0$.  The between-the-obstacle open spaces follow exponential distributions with rate parameters $\beta=0.1,\gamma=0.1$ in the $x$ and $y$ directions respectively.  With these parameters, there is clearly a reasonable likelihood that straight flight through the obstacle field as illustrated will not be impeded.} 
\label{fig:jb:ObstacleField} 
\end{figure}  

This characterization of one-dimensional arrays of obstacle slats provides a basic building block for constructing an idealized model of a two-dimensional obstacle field.  In Fig.\ \ref{fig:jb:ObstacleField}, rows of one-dimensional obstacle arrays are stacked vertically.  To represent exponential randomness in the $y$-direction, we assume that each obstacle is displaced from a mean row position by an amount that is exponentially distributed with parameter $\gamma$.  In Figure \ref{fig:jb:ObstacleField}, the average obstacle width has parameter $1/\alpha$ where $\alpha=1.0$, while the average open space in both the horizontal and vertical directions have length parameters $1/\beta=1/\gamma=10$.

Having thus constructed a two-dimensional Markovian obstacle field, it is easy to compute the likelihood of a point vehicle being able to transit its depth (in the $y$-direction) without colliding with an obstacle.  A straight collision-free path is depicted by the blue arrow.  Assuming that the obstacles may be grouped into horizontal rows in each of which the obstacle widths and separations are given by the stationary versions of (\ref{eq:jb:Markov}) (i.e.\ $p_1=\alpha/(\alpha+\beta),\, p_2=\beta/(\alpha+\beta)$), we find that the probability of transiting exactly $n$ rows before encountering an obstacle is 
\[
P_n=p_1^{n}p_2.
\]
From this, one can calculate the probability of unobstructed motion through at least $n$ rows of obstacles:
\[
Q_n=1-\sum_{k=0}^{n-1}P_k=p_1^n.
\]
Finally, the mean length of a collision-free path entering the obstacle field from a randomly chosen point along the horizontal axis is
\[
S=\sum_{k=0}^{\infty}kP_k=p_1/p_2,
\]
with the corresponding variance being 
\[
\sum_{k=0}^{\infty}\left(k-\frac{p_1}{p_2}\right)^2P_k=\frac{p_1}{p_2}\left(1+\frac{p_1}{p_2}\right).
\]
That the standard deviation is always larger than the mean indicates that uncontrolled motion through even a sparse obstacle field is risky.  
For the parameter values of Figure \ref{fig:jb:ObstacleField}, the mean unobstructed path is ten length units long, while the variance is 110 ($\sim$ standard deviation 10.49).


Continuing the abstraction of the obstacle field as rows of one-dimensional slat obstacles, we come to the question of the probability of constructing a collision-free path by a point vehicle through an obstacle field that contains a specified number of rows.  It is assumed that the point vehicle  in question executes Dubins-type motions---the vehicle's orientation is specified by a parameter $\theta$, its forward velocity $(\dot x,\dot y)$ has unit magnitude and is constrained to satisfy $\dot y\cos\theta-\dot x\sin\theta=0$, and its turning radius cannot be less than some minimum $R_{cr}$.  Under these assumptions, the point vehicle may not be able to avoid colliding with a large obstacle lying directly in its path.

To connect the kinematics of a steerable planar point vehicle with the above probabilistic analysis, we introduce a quantized idealization of Dubins vehicle motion.  Specifically, we assume the vehicle always moves along a straight line path but can instantaneously change direction by an amount $\Delta\theta$ such that $0\le |\Delta\theta |\le\theta_{cr}$.

In the {\em quantized Dubins vehicle} kinematics, a circular arc segment is replaced by an instantaneous stright line path that has the same beginning and ending points as the arc.  This straight line segment is the chord of the arc between the given endpoints.  The quantized Dubins vehicle is steered though the obstacle field by means of the following control protocol.  Note that all path segments are straight lines.  Note also that the protocol admits the possibility of collision with an obstacle.  The probability of such a collision is given by the proposition that follows.

\begin{figure}[ht] 
\begin{center}\includegraphics[width=3.5in]{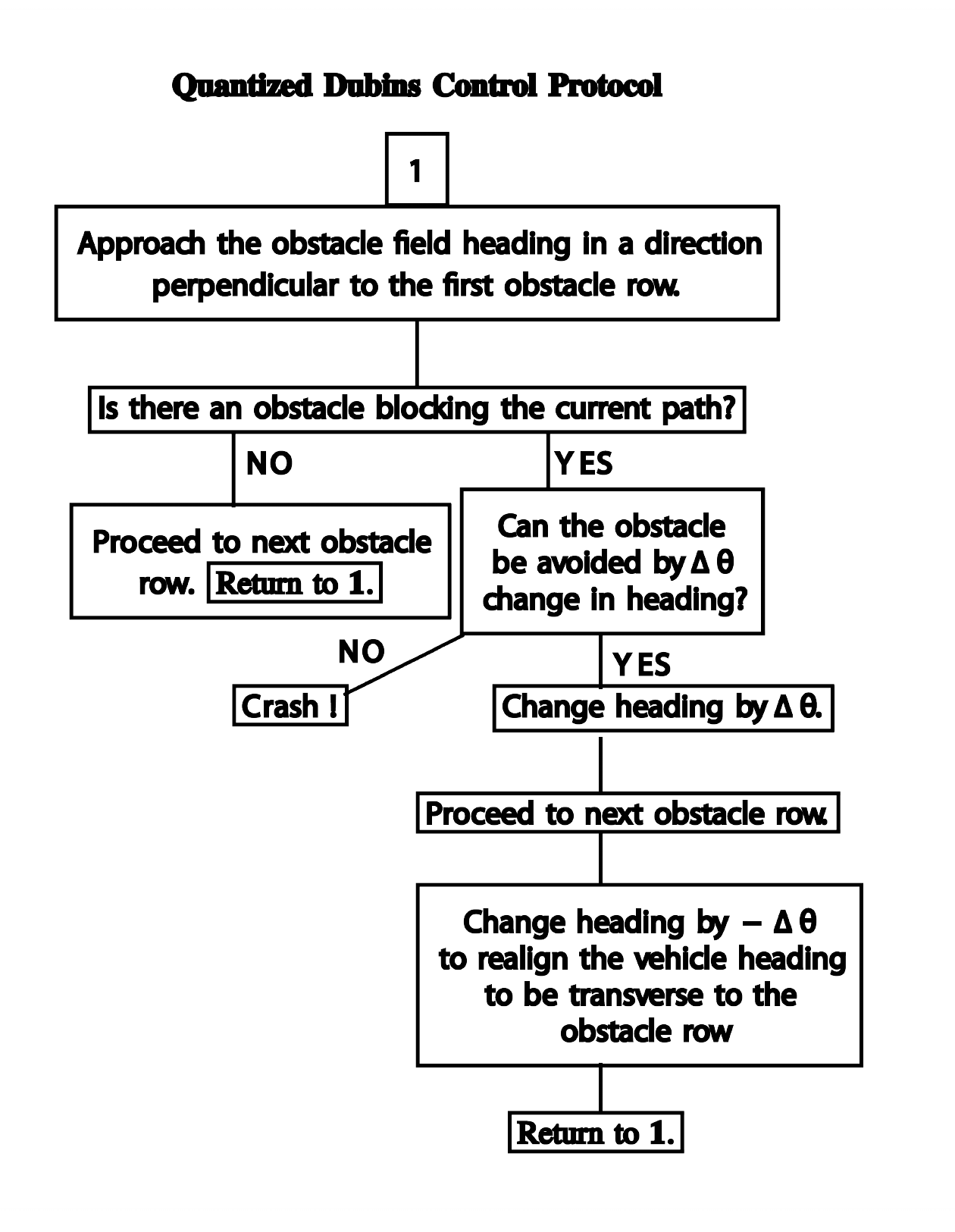}\end{center}
\vspace{-0.2in}
\label{fig:jb:ControlProtocol} 
\end{figure}

\begin{proposition}
Consider an $n$-row obstacle field in which the obstacles in each row lie along a line with the $k$-th row being displaced from the $k-1$-st row by an amount $1/\gamma$.  
The probability that there exists a collision free path through such an $n$-row obstacle field is
\begin{equation}
\left(p_1+p_2(1-e^{-(\alpha /\gamma) \tan\theta_{cr}})\right)^n.
\label{eq:jb:CollisionFree}
\end{equation}
\label{prop:jb:CollisionFree}
\end{proposition}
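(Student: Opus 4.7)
The plan is to exploit factorization across rows (giving the $n$-th power) and then compute the per-row success probability using the memoryless property of the exponential obstacle-width distribution. Because the obstacle patterns in distinct rows are drawn independently, and because the stationary $x$-distribution of the within-row Markov chain~(\ref{eq:jb:Markov}) has slat probability $p_2$ and open-space probability $p_1$ independent of where along the $x$-axis one looks, I would first argue that the events ``the vehicle successfully crosses row $k$'' form a sequence of independent Bernoulli trials with common success probability $q$. The probability of making it all the way through is then $q^n$, matching the outer exponent in~(\ref{eq:jb:CollisionFree}).

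Next I would compute $q$. Let $L=(1/\gamma)\tan\theta_{cr}$ be the maximum lateral displacement the quantized Dubins vehicle can attain between two adjacent rows, since a straight chord connecting rows $k$ and $k+1$ has vertical rise $1/\gamma$ and makes angle at most $\theta_{cr}$ with the forward axis. Let $x^{\ast}$ be the $x$-coordinate at which a straight-line extrapolation of the current heading would intersect row $k+1$. There are two cases: with probability $p_1$ (by stationarity) $x^{\ast}$ lies in an open segment of row $k+1$ and no steering is needed; with probability $p_2$ it lies on a slat and the vehicle must steer to an open point within lateral reach $L$.

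The key quantitative step is that, conditional on $x^{\ast}$ lying on a slat, the distance from $x^{\ast}$ to the right endpoint of that slat is exponential with rate $\alpha$. This is the memoryless property applied to the rate-$\alpha$ transition out of the ``on-slat'' state in~(\ref{eq:jb:Markov}): slat widths are exponential$(\alpha)$, and conditioning on being inside one at a stationary point leaves the forward residual distance also exponential$(\alpha)$. Hence the probability that the vehicle can steer off the slat in a single inter-row span equals $1-e^{-\alpha L}=1-e^{-(\alpha/\gamma)\tan\theta_{cr}}$. Combining the two cases gives $q=p_1+p_2\bigl(1-e^{-(\alpha/\gamma)\tan\theta_{cr}}\bigr)$, and raising to the $n$-th power yields~(\ref{eq:jb:CollisionFree}).

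The main obstacle I anticipate is the rigorous justification of cross-row independence and of the claim that the nominal arrival point $x^{\ast}$ is effectively ``fresh'' with respect to row $k+1$'s obstacle pattern. This will require the control protocol displayed in the preceding figure to base its steering decision only on information local to the upcoming row, so that no correlations accumulated from earlier rows bias the distribution of $x^{\ast}$ at row $k+1$. A secondary subtlety is that the formula treats escape as one-sided --- using only the right edge of the slat --- which I read as a convention imposed by the protocol; under a symmetric two-sided steering rule the independence of the forward and backward residuals (a consequence of the Markov property in $x$) would instead produce $1-e^{-2\alpha L}$ in the second case, so establishing (\ref{eq:jb:CollisionFree}) as stated requires tying the analysis to the specific protocol.
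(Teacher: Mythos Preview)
Your proposal is correct and matches the paper's own argument essentially step for step: split into independent per-row events, use stationarity to get the $p_1/p_2$ dichotomy at the arrival point, and use the exponential residual-width to obtain the $1-e^{-(\alpha/\gamma)\tan\theta_{cr}}$ factor. If anything you are more explicit than the paper, which simply asserts the per-row avoidance probability and the i.i.d.\ product structure without invoking the memoryless property by name; your concerns about cross-row independence and the one-sided-versus-two-sided escape are legitimate subtleties that the paper's proof does not address (it says ``left or right edge'' but writes the one-sided exponent), so tying the result to the specific steering protocol as you suggest is the right instinct.
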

\begin{figure}[ht] 
\begin{center}\includegraphics[width=3.0in]{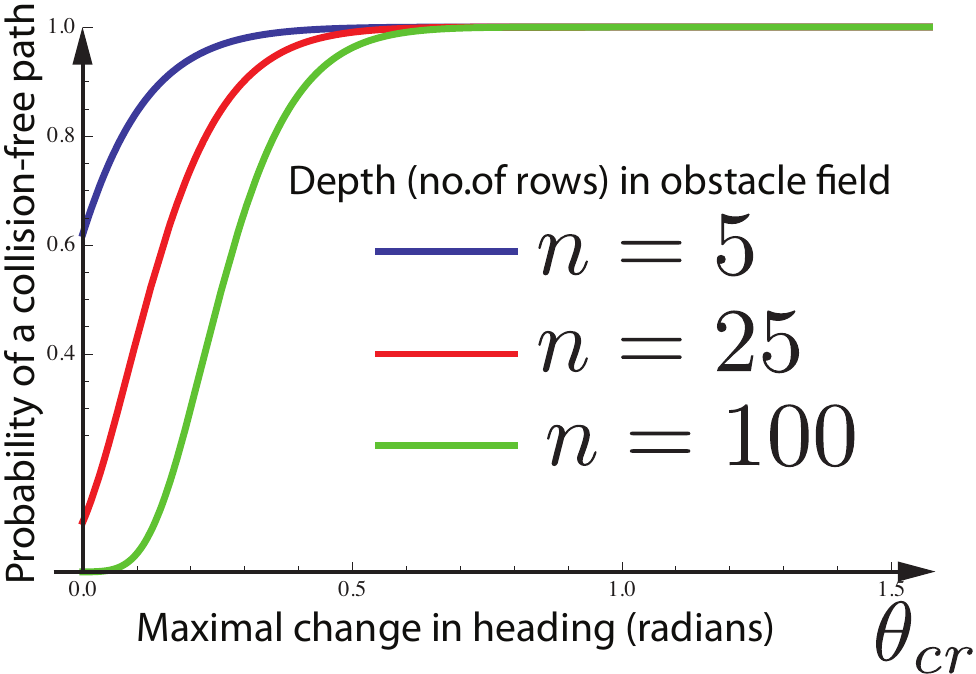}\end{center}
\caption{The probabilities of collision-free paths through $n$ evenly spaced rows of obstacles as functions of Markovian parameters $\alpha,\beta$, depth parameter $1/\gamma$, and critical steering angle $\theta_{cr}$.} 
\label{fig:jb:CollisionProbability} 
\end{figure}  
\begin{remark} \rm
For the parameter values of Fig.\ \ref{fig:jb:ObstacleField} the dependence of this probability on $\theta_{cr}$ is shown in the next figure.  Even modest steering capability dramatically increases the likelihood of existence of a collision-free path.  Moreover, for each number $n$ of obstacle rows, there is a value of steering angle $\theta_{cr}^*$ above which it is virtually certain that a collision free path exists.
\end{remark}

\begin{remark} \rm
Although the details of the curve shapes differ as the parameters $\alpha$ and $\beta$ vary, the form of the dependence of (\ref{eq:jb:CollisionFree}) on $\theta_{cr}$ does not change qualitatively.  Thus, even for dense arrays of thick obstacles, collision-free paths are possible---provided there is sufficient steering authority.
\end{remark}

\begin{remark} \rm
The proposition will be proved under the assumption that the obstacles lie on horizontal lines as opposed to having exponentially distributed probabilistic separation in the "$y$-axis" direction, which is the case depicted in Fig.\ \ref{fig:jb:ObstacleField}.  This uniformity assumption can be relaxed, and we expect that the probability of existence of obstacle avoiding paths will have a qualitatively similar expression.
\end{remark}

\begin{proof} 
(Of Proposition \ref{prop:jb:CollisionFree})  There are exactly two ways in which the point vehicle can avoid colliding as it approaches a row of obstacles.  It may be the case that it approaches the obstacle row at a point lying in open space between obstacles.  The probability of that is $p_1$.  It may also be headed for a collision with an obstacle (the probability of which is $p_2$) and yet steer left or right by an amount $\pm\Delta\theta$ where $\Delta\theta\le\theta_{cr}$ using the steering protocol described above.  The probability that there is enough steering authority to get past the left or right edge of the obstacle is $1-e^{-(\alpha /\gamma) \tan\theta_{cr}}$.  Since being able to steer past an obstacle by changing heading by an amount less than $\theta_{cr}$ is independent of whether an obstacle is encountered, the joint probability of encountering and obstacle AND being able to avoid it is $p_2(1-e^{-(\alpha /\gamma) \tan\theta_{cr}})$.  Thus, the probability of being able to avoid a collision in any row is 
\[
p_1+p_2(1-e^{-(\alpha /\gamma) \tan\theta_{cr}}).
\]
Since the rows have i.i.d. probabilities of not colliding, the probability of a collision-free path through $n$ rows of obstacles is given by (\ref{prop:jb:CollisionFree}) as claimed.
\end{proof}

\section{Motion control using optical sensor feedback}
\label{sec:jb:motion}

In order to prescribe a steering algorithm based on optical flow servoing, we revisit  the simple model of {\em time-to-contact} introduced in Section \ref{sec:jb:Intro}.  The goal will be to develop motion control laws that use images and sensed optical flow to automatically avoid obstacles.  For simplicity, only planar motion is considered.  While there is no essential difficulty in extending the treatment to the case of motion in three dimensions, the restriction to 2-D does not appear to be overly restrictive in trying to understand animal flight through forests since bats and birds tend to pick flight levels above or immediately below canopies of branches and foliage where the only obstacles are the trunks of trees.  (See e.g.\ \cite{Caceres}.)  We extend the analysis of Section \ref{sec:jb:Intro} by considering a more general class of motions of a mobile optical sensor as depicted in Figure \ref{fig:jb:OpticalFlowVersion}.
\begin{figure}[ht] 
\begin{center}\includegraphics[width=2.7in]{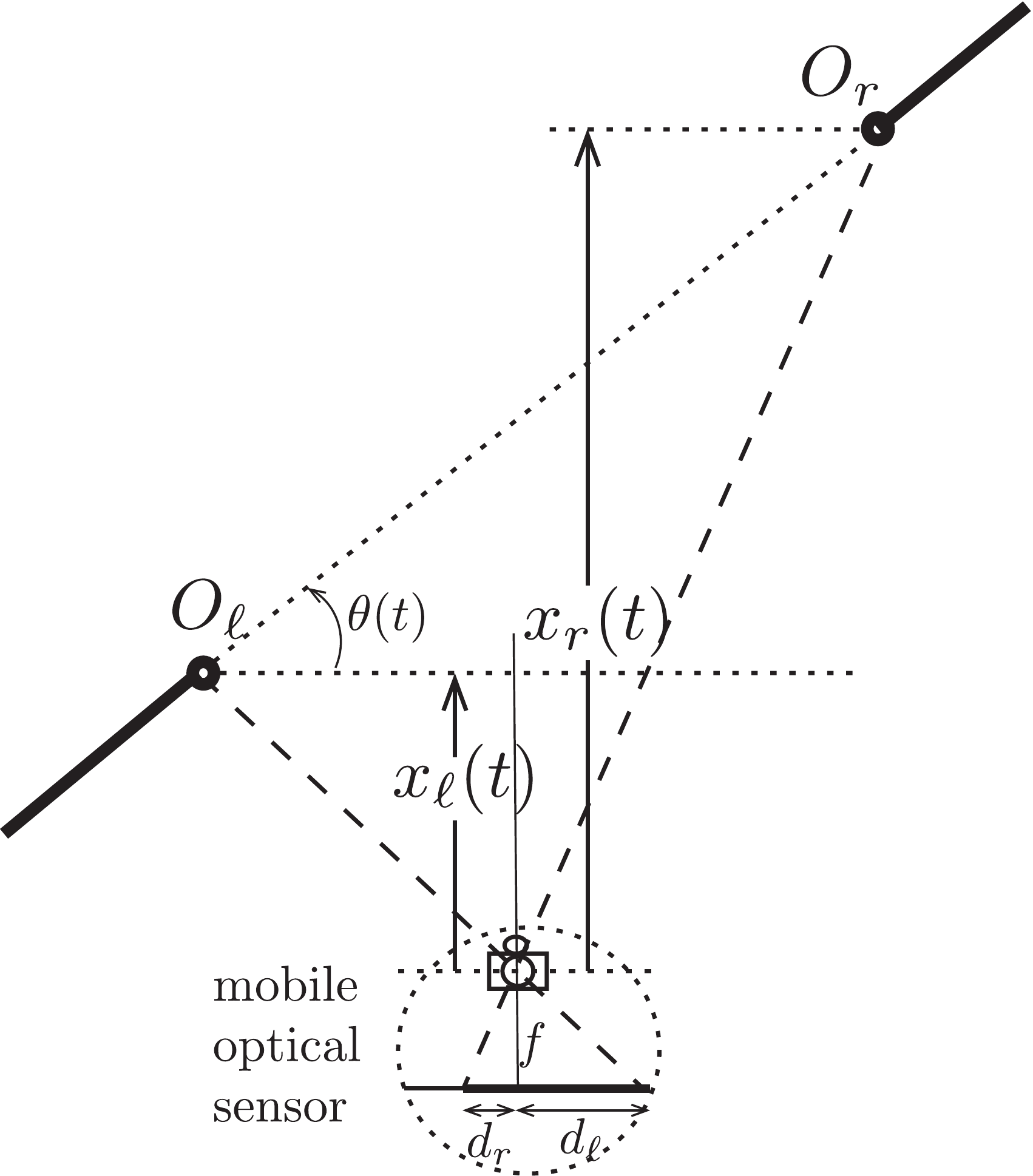}\end{center}
\caption{Thin objects and the open space between are approached at a constant velocity $\dot x(t)=v\cdot t$.} 
\label{fig:jb:OpticalFlowVersion} 
\end{figure}  

{\bf Coordinate frames.}  We assign a body-frame to the mobile optical sensor in a way that is consistent with standard kinematic models of planar nonholonomic vehicles.  Thus, the body-frame $x$-axis is aligned with the direction of motion, and consequently, the image is projected onto the body-frame $y$-axis.  We continue to consider Markovian obstacle fields as described in Section \ref{sec:jb:Markovian}.  Thus we consider linear arrays of obstacle slats (line segments) between which it is desired to have the mobile sensor move.  A frame of world coordinates is assigned such that the obstacle slats (line segments) are aligned with the world-frame $x$-axis, and the direction in which the sensor hopes to move is aligned with the world-frame $y$-axis.  

The mobile sensor coordinates at time $t$ are $\left(x(t),y(t),\theta(t)\right)$, giving its position and orientation with respect to the world frame.  The motion is described by a kinematic model:
\begin{equation}
\left(\begin{array}{c}
\dot x \\
\dot y \\
\dot\theta\end{array}\right) = \left(\begin{array}{l}
v\cos\theta \\
v\sin\theta \\
\omega\end{array}\right),
\label{eq:jb:BasicVehicle}
\end{equation}
where $v$ is its forward speed (in the direction of the sensor's body-frame $x$-axis, and $\omega$ is the turning rate. 

If $(x_r,y_r)$ is the coordinate pair that locates the obstacle point $O_r$ in world coordinates, the corresponding point on the image (a point along the body-frame $y$-axis) is
\begin{equation}
d_r(t)=\frac{-\sin\theta(t)(x_r-x(t))+\cos\theta(t)(y_r-y(t))}{1-\cos\theta(t)(x_r-x(t))-\sin\theta(t)(y_r-y(t))}.
\label{eq:jb:RightDistance}
\end{equation}
It is assumed that the optical sensor focal lenth $f=1$, 
here $(x(t),y(t),\theta(t))$ are the position and orientation of the sensor frame with respect to world coordinates at time $t$.  A similar relationship holds for $O_{\ell}$ and $d_{\ell}(t)$.  If the mobile sensor proceeds along the straight line that is shown in Fig.\ \ref{fig:jb:OpticalFlowVersion}, it will cross a line that passes through $O_{\ell}$ and is perpendicular to the line of travel before it crosses a similar line perpendicular to its path and passing through $O_r$.  Note that this line is rotated by an amount $-\theta(t)$ with respect to the line of obstacle slats.  As depicted, if the optical sensor continues along its straight line path, it will pass between the obstacles that are shown.  Following the reasoning of Section \ref{sec:jb:Intro}, the time to crossing the line through $O_{\ell}$ will be $d_{\ell}/\dot d_{\ell}$, provided a straight line motion at constant velocity (say $v$) is carried out. This assumption of straight-line, constant speed motions is crucial in order to interpret $d_{\ell}/\dot d_{\ell}$ as the {\em time-to-transit} the line through $O_{\ell}$.  To pursue this assumption in computing and interpreting $\dot d_{\ell}$ or $\dot d_r$, we treat $\theta$ as a constant: $\theta(t)\equiv\theta$.  A straightforward calculation then shows that (\ref{eq:jb:RightDistance}) may be more simply stated in terms of the world frame initial position $(x(0),y(0))$ and the constant speed $v$ as
\[
\begin{array}{ccl}
d_r(t)&=&\frac{\cos\theta(y_r-y(0)-v\sin\theta)-\sin\theta(x_r-x(0)-v\cos\theta)}{1-cos\theta (x_r-x(0)-v\cos\theta) \ \sin\theta (y_0-y(0)-v\sin\theta)}\\[0.1in]
&=&\frac{\cos\theta(y_r-y(0))-\sin\theta(x_r-x(0))}{1-cos\theta (x_r-x(0)) - \sin\theta (y_0-y(0))+vt}.
\end{array}
\]
Taking the derivative with respect to $t$, the {\em time-to-transit} the perpendicular line through $O_r$ given a constant straight line velocity $v$ is 
\[
d_r(t)/\dot d_r(t)=\frac{\cos\theta (x_r-x(0))+\sin\theta(y_r-y(0))-1}{v}-t.
\]
The interpretation of this formula is that if constant velocity motion is carried out in the direction $(\cos\theta,\sin\theta)$ given the starting point $(x(0),y(0))$, then the sensor will transit the line perpendicular to its path and passing through the point $(x_r,y_r)$ when $t=(\cos\theta (x_r-x(0))+\sin\theta(y_r-y(0))-1)/v$.  In other words, the time-to-transit from the starting point $(x(0),y(0))$ assuming a straight line velocity $(v\cos\theta,v\sin\theta)$ is 
\[
\tau_r=d_r(0)/\dot d_r(0)=(\cos\theta (x_r-x(0))+\sin\theta(y_r-y(0))-1)/v.
\]


These derivations illustrate the relationships between $\tau_r$ and the vehicle kinematic variables.  It is important to re-emphasize, however, that $\tau_r$ is determined solely from quantities measured by the image sensor, and no knowledge of $x,y,\theta$, or $v$ is assumed.  Similar considerations apply to the left feature point $O_{\ell}$ and $\tau_{\ell}=d_{\ell}(0)/\dot d_{\ell}(0)$.

It is possible to use inferred time-to-transit quantities to generate steering signals for motion control of the mobile point sensor as follows.  In the formula for $\tau_r$, the starting point $(x(0),y(0))$ may be chosen arbitrarily.  In particular, it could be any point along a vehicle trajectory, and hence for any trajectory $(x(t),y(t),\theta(t))$, we write
\begin{equation}
\begin{array}{ccl}
\tau_r(t)&=&\bigl(\cos[\theta(t)] (x_r-x(t))+\sin[\theta(t)](y_r-y(t))-1\bigr)/v,
\label{eq:jb:TimeToTransit}
\end{array}
\end{equation}
which has the interpretation that if the mobile sensor were to proceed at constant speed $v$ in the direction $(\cos\theta(t),\sin\theta(t))$ along the path segment
\[
\left(\begin{array}{c}
\hat x(s)\\
\hat y(s)
\end{array}\right)=
\left(\begin{array}{c}
x(t)+vs\cos\theta(t)\\
y(t)+vs\sin\theta(t)
\end{array}\right),\ \ \ 0\le s\le \tau_r(t)
\]
for $\tau_r(t)$ units of time, it would transit the line perpendicular to this straight line path and passing through $(x_r,y_r)$.  We note that thinking of the transit times $\tau_r(t)$ and $\tau_{\ell}(t)$ as variable quantities that depend on the vehicle kinematics in this way allows useful inferences (e.g.\ is the vehicle slowing or accelerating) about motion relative to the associated feature points.

\subsection{Steering control using image data feedback}

As noted in \cite{Baillieul03},\cite{Baillieul04}, kinematic control laws using sensor measurements referenced to the body frame must be understood in light of Brockett's nonstabilizibility theorem. More specifically, no configuration $\left(x_1,y_1,\theta_1\right)$ can be asymptotically stabilized by a kinematic control law in which the velocity variables $v,\omega$ in (\ref{eq:jb:BasicVehicle}) depend continuously on sensor measurements.  This is well illustrated by a feedback law based on distance and bearing measurements of the type provided by LIDAR or binocular vision.  Suppose such a sensor accurately measures the vehicle's distance $\rho$ and bearing $\phi$ with respect to a feature point---say $O_r=(x_r,y_r)$.  In \cite{Baillieul04}, it was shown that for selected values of a gain parameter $\lambda$ the feedback control law 
\begin{equation}
v(\rho)=\lambda(\rho-\phi),\ \ {\rm and}\ \ \omega(\rho,\phi)=\rho\sin\phi-d
\label{eq:jb:perp.rho.phi}
\end{equation}
will always steer the system (\ref{eq:jb:BasicVehicle}) to a point on a circle of prescribed radius $d$ centered at a goal point.  More specifically, we have the following theorem whose proof is given in \cite{Baillieul04}.

\begin{theorem} (\cite{Baillieul04})
Consider the controlled system (\ref{eq:jb:BasicVehicle}) with initial conditions
$(x_0,y_0,\theta_0)$, and consider a goal point with coordinates $(x_r,y_r)$.
For the control law (\ref{eq:jb:perp.rho.phi}), suppose $\lambda$ is in the interval $0<\lambda<d$. 
If the initial configuration of the vehicle body frame $(x_0,y_0,\theta_0)$ is
not directed along the unstable branch of $\rho\sin\phi-d=0$, then the
motion of (\ref{eq:jb:BasicVehicle}) tends asymptotically to $(\rho,\phi)
=(d,\pi/2)$.  The corresponding $(x,y,\theta)$ trajectory generated by
(\ref{eq:jb:BasicVehicle}) tends asymptotically to a point on the circle
$$(x-x_r)^2+(y-y_r)^2 = d^2$$
with the body-frame $x$-axis [in the direction $(\cos\theta,\sin\theta)$]
tangent to the circle, pointed in the counterclockwise direction.  In a completely symmetric fashion,
 the feedback law
$v=\lambda(\rho - d)$, $\omega=\rho\sin\phi + d$ produces 
trajectories tangentially approaching the circle of radius $d$ about 
the goal point in the clockwise direction of tangency.
\label{thm:jb:Baillieul04}
\end{theorem}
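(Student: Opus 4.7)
The natural first move is to pass from the three-dimensional configuration $(x,y,\theta)$ to the two range-bearing coordinates $(\rho,\phi)$ in which the sensor measurements, and hence the feedback (\ref{eq:jb:perp.rho.phi}), are actually expressed. Differentiating $\rho^2 = (x_r-x)^2+(y_r-y)^2$ and $\phi = \arctan\bigl((y_r-y)/(x_r-x)\bigr)-\theta$ along (\ref{eq:jb:BasicVehicle}) yields the familiar polar-kinematic equations
\begin{equation*}
\dot\rho = -v\cos\phi, \qquad \dot\phi = \frac{v\sin\phi}{\rho}-\omega,
\end{equation*}
so substituting (\ref{eq:jb:perp.rho.phi}) produces an autonomous planar vector field on the reduced state space. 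This reduction is the crux of the argument: the nonholonomic direction $\theta$, which is what blocks point stabilisation via Brockett's nonstabilizability theorem in the full $(x,y,\theta)$-space, is precisely what is quotiented out, leaving a standard two-dimensional stability problem.

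\medskip

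\noindent\textbf{Equilibria and convergence.} Setting $\dot\rho=\dot\phi=0$ shows that the equilibria lie on the nullcline $\rho\sin\phi=d$ of the steering law; the intended configuration is $(\rho,\phi)=(d,\pi/2)$, and a second equilibrium on the same curve supplies the ``unstable branch'' named in the hypothesis. A direct linearisation at $(d,\pi/2)$ turns out to be degenerate, so I would exploit the algebraic structure of the closed-loop system by introducing the auxiliary variable $u=\rho\sin\phi-d$ (the steering signal itself); a short computation using the polar equations above gives
\begin{equation*}
\dot u = -\rho\cos\phi\cdot u,
\end{equation*}
which already shows that $u\to 0$ along any trajectory on which $\rho\cos\phi$ is bounded away from zero a positive fraction of the time. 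I would then complement this with a Lyapunov/LaSalle argument on $(\rho,\phi)$-space --- using a positive-definite function built from $(\rho-d)^2$ and the residual bearing error --- to isolate $(d,\pi/2)$ as the unique limit point for initial conditions off the stable manifold of the second equilibrium. It is at this step that the gain hypothesis $0<\lambda<d$ is expected to enter, guaranteeing sign-definiteness of the Lyapunov derivative on the relevant region of the phase plane.

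\medskip

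\noindent\textbf{Lifting, symmetric case, and main obstacle.} Once $(\rho(t),\phi(t))\to(d,\pi/2)$ has been established, the statements in $(x,y,\theta)$-coordinates are immediate: $\rho\to d$ places the trajectory on the prescribed circle, and $\phi\to\pi/2$ forces the body-frame $x$-axis to be perpendicular to the goal-to-vehicle radial direction, i.e.\ tangent to the circle in the counterclockwise sense. The companion law $v=\lambda(\rho-d),\,\omega=\rho\sin\phi+d$ yields the clockwise statement by applying the reflection $\phi\mapsto-\phi$ to the same argument. The principal difficulty I foresee is the global analysis: the non-hyperbolicity of the target equilibrium together with the explicit exclusion of a stable manifold in the hypothesis rule out a purely local linearisation argument, so the proof must rest on identifying an invariant region or a carefully constructed Lyapunov function whose sub-level sets match the true basin of attraction, with the gain bound $0<\lambda<d$ entering this construction in an essential way.
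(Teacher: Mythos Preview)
The paper does not actually prove this theorem in-line; it attributes the proof to \cite{Baillieul04} and only summarizes its structure in one sentence: the argument ``is carried out in two steps, the first of which specifies a set in terms of the body-referenced variables $\rho$ and $\phi$ that is invariant under the controlled motions.  The second step is to show that all motions starting in this invariant set tend asymptotically to the specified limits.''  Your plan is fully consistent with that description --- you correctly pass to the $(\rho,\phi)$ plane via the polar kinematics, and in your final paragraph you already identify the crux as finding an invariant region together with a Lyapunov-type argument, which is exactly the two-step scheme the paper names.

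Your auxiliary computation $\dot u=-\rho\cos\phi\cdot u$ for $u=\rho\sin\phi-d$ is correct and is a nice organizing observation, but be aware of the caveat implicit in your own phrasing: this yields decay of $u$ only where $\cos\phi>0$, so it cannot replace the invariant-set step --- it is what makes the \emph{second} step (asymptotic convergence) straightforward \emph{once} you know trajectories are trapped in a region where $\cos\phi$ stays nonnegative.  That is precisely why the cited proof is organized as ``first invariant set, then convergence'' rather than as a single global Lyapunov argument, and it is in establishing positive invariance of such a sector that the gain restriction $0<\lambda<d$ should enter.  Your diagnosis of the degenerate linearization at $(d,\pi/2)$ is also correct and explains why a purely local argument will not suffice.
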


\begin{remark} \rm
{\em Servoing on feature perceptions:}  One of the great challenges in using body-referenced sensing is illustrated by the {\em singular variety} $\rho\sin\phi-d=0$ and its apparent movement in the world frame as the vehicle moves.  The singular variety is a manifestation of Brockett's nonstabilizibility theorem, and it specifies a set of configurations from which the goal points cannot be reached by the control law (\ref{eq:jb:perp.rho.phi}).  In $(x,y)$ coordinates, the singular variety can be constructed as follows.  Draw a circle of radius $d$ around the goal point $(x_r,y_r)$.  Next draw a line from the current vehicle position $(x,y)$ to the goal point.  Draw a line that is is tangent to the circle and that makes an angle $\pi-\phi$ with the previous line as depicted in Fig.\ \ref{fig:jb:Asympt-Set}.  This is the stable branch of the singular variety.  The semi-infinite red line making an angle $\phi$ with the line to the goal point is the unstable branch.  This unstable branch characterizes the initial configurations from which the control law fails to asymptotically approach tangency to the circle.  Singular sets of this type are characteristic of systems to which Brockett's nonstabilizability theorem applies.

\end{remark}
\begin{figure}[ht] 
\begin{center}\includegraphics[width=2.7in]{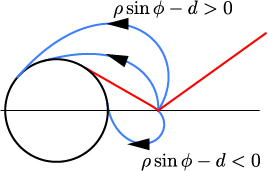}\end{center}
\caption{The singular set (red line segments) associated with the distance/bearing feedback law (\ref{eq:jb:perp.rho.phi}).  Given the goal point at the center of the circle and the current vehicle at the intersection of the two lines, it is the point locus of $0=\rho\sin\phi-d$.} 
\label{fig:jb:Asympt-Set} 
\end{figure}  

As the vehicle motion proceeds, the world-frame locus of the singular variety changes---with the vehicle's position always being at the intersection of the two branches of the singular locus.  The singular locus is characteristic of the way that features evolve when rendered in terms of world coordinates.  Similar feature dynamics are seen in the {\em lines of transit} associated with a feature point and the {\em time-to-transit} parameter $\tau$ introduced above.  As the vehicle moves, these lines of transit rotate about the feature point.  This apparent motion plays a role in the design and analysis of a time-to-transit feedback law that enables passage between the feature points $O_{\ell}$ and $O_r$.

In \cite{Baillieul04}, the proof of Theorem \ref{thm:jb:Baillieul04} is carried out in two steps, the first of which specifies a set in terms of the body-referenced variables $\rho$ and $\phi$ that is invariant under the controlled motions.  The second step is to show that all motions starting in this invariant set tend asymptotically to the specified limits.  In the following, a control law and its invariant set are specified in terms of our image parameters.

\begin{theorem}
Consider the mobile sensor depicted in Fig.\ \ref{fig:jb:OpticalFlowVersion} and having motion kinematics (\ref{eq:jb:BasicVehicle}).  Let $\epsilon>0$ be a small positive constant.  The image-referenced configuration set
\begin{equation}
d_{\ell}\le-\epsilon\ {\rm and}\ \ d_r\ge\epsilon
\label{eq:jb:InvariantSet}
\end{equation}
is invariant under the feedback control law 
\[
\begin{array}{ccl}
v(t)&=&\min(1,\tau_{\ell}+\tau_r)\\
\omega(t)&=&\left\{\begin{array}{ccl}
0&{\rm if}&d_r=\epsilon\ {\rm or}\ d_{\ell}=-\epsilon\\
\tau_r-\tau_{\ell}&{\rm if}& d_r>\epsilon\ {\rm and} \ d_{\ell}<-\epsilon.
\end{array}\right.
\end{array}
\]
\end{theorem}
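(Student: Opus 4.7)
The plan is to apply the Nagumo viability criterion: the closed set defined by (\ref{eq:jb:InvariantSet}) is forward invariant under the closed-loop vector field provided that at each boundary point, the vector field is subtangent to the set. The boundary decomposes into two smooth faces $\{d_r = \epsilon,\, d_\ell < -\epsilon\}$ and $\{d_\ell = -\epsilon,\, d_r > \epsilon\}$ meeting at the corner $\{d_r = \epsilon,\, d_\ell = -\epsilon\}$. On each of these boundary pieces the feedback law is designed precisely so that $\omega = 0$, so the key verification reduces to showing $\dot d_r \ge 0$ on the first face and $\dot d_\ell \le 0$ on the second.

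First I would introduce body-frame coordinates $\tilde x_r = \cos\theta\,(x_r - x) + \sin\theta\,(y_r - y)$ and $\tilde y_r = -\sin\theta\,(x_r - x) + \cos\theta\,(y_r - y)$ for the feature $O_r$ (and analogously for $O_\ell$), so that (\ref{eq:jb:RightDistance}) reads $d_r = \tilde y_r/(1-\tilde x_r)$ and, via (\ref{eq:jb:TimeToTransit}), $v\,\tau_r = \tilde x_r - 1$. Along trajectories of (\ref{eq:jb:BasicVehicle}) these body-frame coordinates satisfy $\dot{\tilde x}_r = \omega\,\tilde y_r - v$ and $\dot{\tilde y}_r = -\omega\,\tilde x_r$. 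Differentiating $d_r$ then yields
\[
\dot d_r \;=\; \frac{\omega\,(\tilde x_r^2 + \tilde y_r^2 - \tilde x_r) \;-\; v\,\tilde y_r}{(1-\tilde x_r)^2},
\]
and the analogous identity holds for $\dot d_\ell$. This step is pure calculus and carries no essential difficulty.

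Second, I would specialize to the boundary. On $\{d_r = \epsilon,\, d_\ell < -\epsilon\}$ the control enforces $\omega = 0$, so the above expression collapses to $\dot d_r = -v\,\tilde y_r/(1-\tilde x_r)^2$. Substituting the boundary relation $\tilde y_r = \epsilon\,(1-\tilde x_r)$ together with $\tilde x_r - 1 = v\,\tau_r$ produces the clean identity $\dot d_r = \epsilon/\tau_r$. Under the natural operating hypothesis that both features still lie ahead of the image plane (so that $\tau_r,\tau_\ell > 0$) this is nonnegative; moreover the control gives $v = \min(1,\tau_r+\tau_\ell) \ge 0$, so the subtangentiality condition is satisfied. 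A fully symmetric calculation on the other face yields $\dot d_\ell = -\epsilon/\tau_\ell \le 0$. The corner where the two faces meet is handled by both analyses at once, since $\omega = 0$ there as well.

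The main obstacle is carving out the correct operating regime: the closed form $\dot d_r = \epsilon/\tau_r$ shows transparently that invariance is tied to the sign of $\tau_r$, and once the sensor transits $O_r$ the corresponding face becomes exit-like rather than entry-like, so some hypothesis confining attention to the approach phase is unavoidable. A secondary technical point is that the feedback $\omega$ is discontinuous across the boundary, so solutions must be interpreted in the Carath\'eodory (or Filippov) sense; because $\omega = 0$ is precisely the limiting value used on the boundary, however, the Nagumo test applies without additional regularization.
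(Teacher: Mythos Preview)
Your argument is correct and considerably more direct than the paper's. The paper does not invoke Nagumo; instead it gives only a proof sketch that splits into two \emph{world-frame} cases according to whether the vehicle's $x$-coordinate lies inside or outside the strip $x_\ell\le x\le x_r$ (using also $y<y_r=y_\ell$). In the first case it asserts, without computation, that the set is invariant and that eventually $d_\ell<-\epsilon$, $d_r>\epsilon$, whereupon the $\tau_r-\tau_\ell$ steering equalizes transit times and the vehicle approaches the gap perpendicularly. In the second case it argues qualitatively that $d_r$ decreases to $\epsilon$ and then sticks there until the gap is reached. The sketch thus mixes the invariance claim with asymptotic-behavior claims and explicitly defers a full proof.

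Your Nagumo approach isolates exactly the invariance assertion and verifies it by a clean boundary computation; the identity $\dot d_r=\epsilon/\tau_r$ when $\omega=0$ on the face $d_r=\epsilon$ is sharper than anything in the paper and, as you note, exposes that positivity of $\tau_r,\tau_\ell$ (features still ahead of the image plane) is the precise standing hypothesis needed. The paper's route, by contrast, trades rigor for a narrative of the full closed-loop motion, including the perpendicular approach to the segment between $O_\ell$ and $O_r$; that qualitative picture is something your boundary calculation does not attempt to supply.
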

\begin{proof} [Sketch] 
Under the assumed model of the obstacle field, we have $y_r=y_{\ell}$.  I.e.\ the left obstacle and the right obstacle, between which we wish to pass, are the same distance from the $x$-axis.  We assume that the desired motion is in the direction of the positive $y$-axis, and hence throughout the time over which the control law operates, we have the vehicle's $y$ coordinate $y < y_r$.

We consider two cases:  First, suppose that the vehicle is located at a point $(x,y)$ with $x_{\ell}\le x\le x_r$ and $y<y_r$.  If this is the case and (\ref{eq:jb:InvariantSet}) holds, the heading of the vehicle is such that in the absence of steering (i.e.\ $\omega=0$) it would approach the line segment connecting $(x_{\ell},y_{\ell})$ and $(x_r,y_r)$.  It can be shown that the set $x_{\ell}<x<x_r$, $y<y_r$, and (\ref{eq:jb:InvariantSet}) is invariant, and that for some $T>0$ and all $t>T$, $d_{\ell}<-\epsilon$, $d_r>\epsilon$.  Thus, the vehicle will be steered by $\omega(t)=\tau_r(t)-\tau_{\ell}(t)$, and it will turn so as to equalize these transit times, and thus it asymptotically approaches the line segment between $(x_{\ell},y_{\ell})$ and $(x_r,y_r)$ in an asymptotically perpendicular direction.

In the second case, the vehicle position $(x,y)$ is assumed to lie outside the strip $x_{\ell}\le x\le x_r$.  We assume without loss of generality that $x>x_r$.  (The case $x<x_{\ell}$ is treated symmetrically.)  If the vehicle is such that $d_r>\epsilon$, then because $\tau_r-\tau_{\ell}<0$, it will turn toward the point $(x_r,y_r)$.  It may enter the strip $x_{\ell}\le x\le x_r$, in which case the motion beyond that point is as discussed in case one above.  If it does not enter the strip, the value of $d_r$ will continue to decrease.  Once $d_r=\epsilon$, it will remain at that value until the vehicle has nearly reached the line segment joining $(x_{\ell},y_{\ell})$ and $(x_r,y_r)$.  At that point $\tau_{\ell}\sim \tau_r$, and the vehicle approaches the line segment in an asymptotically perpendicular direction, and it passes close to the point $(x_r,y_r)$.
\end{proof}
Space does not allow a more complete proof.  This will be given elsewhere.

\section{Optical Flow Implementation}

The research described above is supported by an application platform which will be useful in understanding issues in our study of animal-inspired flight dynamics.  We have developed a custom UAV based on the popular quadcopter airframe, equipped with motion sensors, an onboard camera, and a Gumstix Fire SBC (Single Board Computer). Our aim is to demonstrate that the real-time, robust measurement necessary for navigation as discussed earlier in the paper is possible with these well established technologies together with appropriately selected algorithms for real-time optical flow processing.

Modern optical flow algorithms can be roughly grouped into two families: \emph{sparse flow} and  \emph{dense flow}. In sparse optical flow, only certain points are tracked, and relatively simple models such as (\ref{eq:jb:OpticalFlow}) describe the motion of feature points on the image plane.
In dense optical flow the entire warping transformation is described by the well known intensity flow equation
\begin{equation}
\frac{\partial{I}}{\partial{x}} v_x + \frac{\partial{I}}{\partial{y}} v_y  + \frac{\partial{I}}{\partial{t}} = 0.
\label{eq:optical_flow}
\end{equation}

Sparse, i.e. local, methods are both more robust under noise and simpler to implement in real-time applications. Like all finite approximations to continuum problems, sparse methods can be criticized for not adequately allowing reconstruction of the entire flow field (\cite{Bar-Fle.1994}), but for feedback control applications their superiority seems clear. The basic sparse flow algorithm that has been tested is derived from OpenCV's Lucas-Kanade implementation.  Feature points are either found in high texture regions of the scene or are taken at mesh points, and these are tracked to give flow fields of adequate density to implement the types of control algorithms described in the previous section. Our lab is currently conducting indoor free flight tests together with selected tethered outdoor flight tests.

\subsection{Test flight examples}

\begin{figure}[ht] 
\begin{center}\includegraphics[width=3.5in]{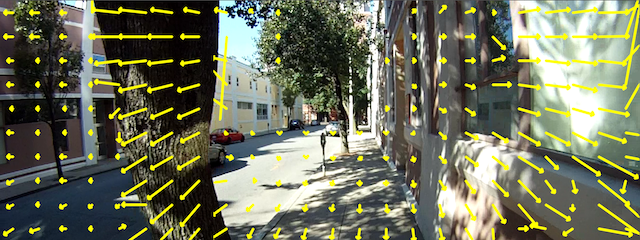}\end{center}
\caption{The kinetics can be clearly seen, as the tree stands out compared to its immediate surroundings. } 
\label{fig:ks:Cummington1} 
\end{figure}  

\begin{figure}[ht] 
\begin{center}\includegraphics[width=3.5in]{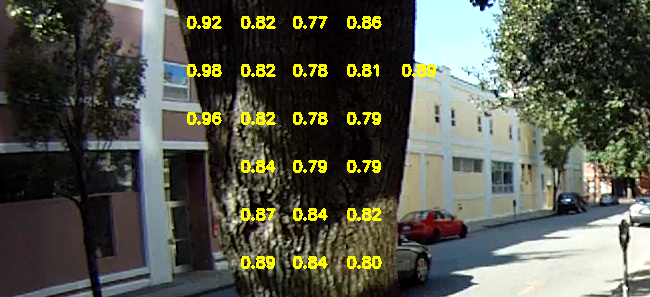}\end{center}
\caption{Showing time-to-contact in a close up of the tree.} 
\label{fig:ks:Cummington2} 
\end{figure}  

Figures \ref{fig:ks:Cummington1} and \ref{fig:ks:Cummington2} show views from an aerial vehicle flying down a narrow sidewalk, bracketed by trees on the left and a wall on the right. These images were used in calculating optical flow, without any image preprocessing (aside from the necessary conversion to greyscale). In addition, they were not chosen due to their lack of noise, but instead due to the interesting features they present.

In Fig.~\ref{fig:ks:Cummington1}, the proximity of the tree can clearly be inferred from the flow field's discontinuity. Objects in the background have minimal flow, but the tree itself has perhaps the highest flow in the scene. This demonstrates graphically the fact that it is not necessary to recognize obstacles in order to avoid colliding with them while using them as way-point beacons for navigating through cluttered environments..

Indeed, despite the 3D world, the UAV moves almost along a plane parallel to the ground, and so the approaches to calculating $\tau$ as given earlier in the work are perfectly applicable. Thus we calculate that the $\tau$ for the tree is 0.80 seconds, a calculation born out by the video data, which showed the traversal of a line perpendicular to the trees center in approx.~3/4 seconds after Fig.~\ref{fig:ks:Cummington1} was taken.

Figure \ref{fig:ks:Cummington2} shows a close up of the tree in Fig.~\ref{fig:ks:Cummington1}. The vectors have been replaced by a simple calculation of $\tau$. Note that the overall agreement is quite good, as the standard deviation is only $\approx$6ms, which on the scale of the UAV's movement is an order of magnitude faster than its dynamics.

\section{Conclusions and Further Work}

The paper has provided results on two interrelated but distinct aspects of the problem of autonomous vehicle flight control through fields of obstacles.  Results of Section II quantify the difficulty of constructing collision-free motions in terms of the size and density of the obstacles.  The mean length of a collision-free uncontrolled path has been shown to vary inversely with the average diameter of the obstacles, but it was shown that the variance of this variable grows faster than the square of the mean decreases---making uncontrolled motion very risky.  It was also shown that for what we termed a {\em quantized Dubins vehicle}, that the probability of being able to steer a vehicle along a collision-free path was sensitive to the level of steering authority, and for a field of obstacles of given average size and density there exists a narrow range of critical levels of steering authority below which it is almost impossible to move without colliding and above which it is virtually certain that a collision-free path can be followed.  In Sections I and III, several aspects of optical sensing for motion control were discussed.  Recalling the {\em time-to-contact} parameter $\tau$ from the perceptual psychology literature, we introduced the concept of {\em time-to-transit} and indicated how each feature point in a mobile sensor's field of view anchored a line of transit that rotated in world coordinates as the sensor moved.  Taking this motion into account, the parameter $\tau=\tau(t)$ was seen to vary along any flight path, and by selecting appropriate feature points---preferably kinetic boundaries between obstacles and free space, the corresponding variables $\tau(t)$ were shown to be usable as inputs to a control law.  An important feature of this type of control is that $\tau$ was shown to depend only on the sensor image, and it required no knowledge of the position and velocity of the sensor/vehicle.

Further research is needed to develop appropriate families of such control laws that can be used to steer through obstacle fields of complexity similar to the models of Section II.  It is only when complex fields are considered that an understanding of how such optical flow enabled control might degrade with respect to obstacle size and density.  Because this type of motion control appears to be employed in the natural world, it will be a great interest to see in the performance characteristics that we expect to develop will be observable in the flight behaviors of animals.

\end{document}